\author{Maria Bras-Amor\'os, Klara Stokes}
\thanks{This work was partly supported by
the Spanish Government through projects TIN2009-11689
``RIPUP'' and CONSOLIDER INGENIO 2010 CSD2007-00004 ``ARES'',
 and by the Government of Catalonia under grant 2009 SGR 1135.}
\title{The Semigroup of Combinatorial Configurations}
\newtheorem{lemma}{Lemma}
\newtheorem{corollary}{Corollary}
\newtheorem{theorem}{Theorem}
\newcommand\mut[1]{\ignorespaces}
\begin{document}
\maketitle

\begin{abstract}
We elaborate on the existence and construction of the so-called
combinatorial configurations. The main result is that
for fixed degrees the existence of such configurations 
is given by a numerical semigroup.
The proof is constructive giving a method to obtain combinatorial configurations
with parameters large enough.
\end{abstract}

\section{Introduction}

In the literature
a combinatorial configuration~\cite{Gropp07}  (or a partial linear space
\cite{CTM})
is defined as a 
particular case of a so-called incidence structure. 
Here, for simplicity, we chose to
define it as a particular case of bipartite graph.
Since incidence structures and bipartite graphs are essentially the same,
our choice to use bipartite graphs does not introduce any ambiguity.
We define a $(v,b,r,k)$-combinatorial configuration as a connected bipartite graph 
with $v$ vertices on one side, each of 
them of degree $r$, 
and $b$ vertices on the other side, 
each of them of degree $k$, and with no cycle of length~$4$.

There are many results on the existence of combinatorial configurations.
For instance in Gropp's papers 
\cite{Gropp92,Gropp94,Gropp05,Gropp07} and in Gr\"unbaum's book \cite{Grunbaum}.  
Gropp states in his references
that the next two conditions are necessary for the existence of a $(v,b,r,k)$-configuration.
P1: $vr=bk$; P2: $v\geq r(k-1)+1$.
In particular, for $k=3$ he proves that P1 and P2 are also sufficient. 
The
next theorem 
by Gropp
guarantees the existence of large configurations and, in fact, 
the existence of any configuration satisfying the necessary conditions with {\it sufficiently large} $v$ (and so $b$). 
Its
limitation 
is the restriction on the choice of the parameters $r,k$.

\begin{theorem}\cite[Theorem 3.13]{Gropp94}
For given $k$ and $r$ with $r=tk$
there is a $v_0$ depending on $k,t$ such that there is a 
$(v,b,r,k)$-configuration for all $v\geq v_0$
satisfying P1 and P2.
\end{theorem}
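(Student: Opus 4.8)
The plan is to prove the theorem by an explicit cyclic construction controlled by a Sidon set; this also makes transparent where the ``sufficiently large $v_0$'' comes from. First, note that since $r=tk$, condition P1 forces $b=vt$, so it suffices to build a $(v,vt,tk,k)$-configuration for every sufficiently large $v$, and if we insist that $v_0\geq tk(k-1)+1$ then P2 holds automatically. Take the point set to be $\mathbb{Z}_v$. Suppose $S\subseteq\mathbb{Z}_v$ is a Sidon set --- i.e.\ the nonzero differences $x-y$ with $x,y\in S$ are pairwise distinct --- of size $tk$ with $0,1\in S$, and partition it as $S=D_1\sqcup\cdots\sqcup D_t$ with $|D_a|=k$ and $0,1\in D_1$. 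Declare the lines to be the translates $L_{a,c}:=D_a+c$ for $1\leq a\leq t$ and $c\in\mathbb{Z}_v$, with the evident incidence between $p\in\mathbb{Z}_v$ and $L_{a,c}$.

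Next I would check that this bipartite graph is a $(v,vt,tk,k)$-configuration. Each line has $k$ points. Two translates $D_a+c$ and $D_{a'}+c'$ have the same difference set, so if they are equal then $a=a'$ (distinct blocks of a Sidon set have disjoint sets of nonzero differences), and then $c=c'$ (a $k$-element Sidon set has no nonzero period once $v$ exceeds twice its spread); hence there are exactly $tv$ lines. For a fixed point $p$ and fixed $a$, the lines through $p$ in the $a$-th family are $\{D_a+(p-d):d\in D_a\}$, which are $k$ distinct lines, so $p$ has degree $tk$. Connectivity holds because $1\in D_1-D_1$ generates $\mathbb{Z}_v$, and every line meets a point. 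The crucial point is the absence of $4$-cycles: a $4$-cycle would be a pair of distinct points $p\neq q$ lying on two distinct lines $L_{a,c},L_{a',c'}$; then $p-q$ would be expressed \emph{twice} as a difference of two elements of $S$ --- contradicting the Sidon property if $a=a'$, and the disjointness of the blocks' difference sets if $a\neq a'$.

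Finally one must produce such an $S$ for all large $v$. I would fix, independently of $v$, an integer Sidon set $S_0\subseteq\{0,1,\ldots,N\}$ with $\{0,1\}\subseteq S_0$ and $|S_0|=tk$ (for instance the greedy Mian--Chowla set, giving an explicit bound $N=N(t,k)$), and reduce it modulo $v$: all its nonzero differences lie in $(-N,N)\setminus\{0\}$, hence remain distinct and nonzero modulo every $v\geq 2N$, so $S_0\bmod v$ is a Sidon set of size $tk$ in $\mathbb{Z}_v$. Taking $v_0:=\max\{\,2N,\ tk(k-1)+1\,\}$ then furnishes a $(v,vt,tk,k)$-configuration for every $v\geq v_0$, which is exactly the assertion. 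I expect the only genuine obstacle to be the $4$-cycle verification together with the bookkeeping that the $tv$ translates are pairwise distinct; the number-theoretic ingredient is merely the existence of one integer Sidon set of the prescribed size, which is elementary and, decisively, uniform in $v$, so that a single set settles all large $v$ at once.
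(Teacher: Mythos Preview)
The paper does not supply a proof of this theorem: it is quoted from \cite{Gropp94} as background, and the paper's own contribution is rather to \emph{generalise} it (dropping the hypothesis $r=tk$) via the semigroup machinery of Lemmas~\ref{lemma:Drkmot0}--\ref{lemma:submonoid} and Theorem~\ref{theorem:Drksemigroup}, which proceeds by edge-swapping between copies of a fixed configuration. So there is no in-paper proof to compare your argument against, and your cyclic/Sidon construction is methodologically unrelated to anything the present paper does.

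Your argument itself is correct and self-contained. The translates of the $k$-blocks of a Sidon set $S\subseteq\mathbb{Z}_v$ form the lines of a cyclic $(v,vt,tk,k)$-configuration: two distinct lines through two common points would force a nonzero element of $\mathbb{Z}_v$ to occur twice as an ordered difference in $S$, contradicting the Sidon property; the degree and connectivity checks are as you state, and non-periodicity of each $D_a$ (so that the $tv$ lines are distinct) already follows from the Sidon property without the extra ``spread'' remark. Fixing one integer Sidon set $S_0\subseteq\{0,\dots,N\}$ of size $tk$ and reducing it modulo every large $v$ is the right way to make $v_0$ explicit and uniform in $v$. One small correction: the nonzero integer differences of $S_0$ lie in $[-N,N]$, not the open interval $(-N,N)$, so you need $v\geq 2N+1$ rather than $v\geq 2N$ to keep them pairwise distinct modulo $v$; adjust the definition of $v_0$ accordingly.
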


In this short note we will generalize this result by showing 
that 
for any fixed $k\geq 2$ and for {\em any}
fixed $r\geq 2$,
the set of tuples $(v,b)$ for which a $(v,b,r,k)$-combinatorial 
configuration exists
is in bijection with a numerical semigroup. 

\section{The submonoid of $(r,k)$-configurable tuples}

We say that the tuple $(v,b,r,k)$  is {\it configurable}
if a $(v,b,r,k)$ configuration exists.
%
It is immediate to prove 
that if $(v,b,r,k)$ is configurable then $vr=bk$ and consequently
there exists $d$ such that $v=d\frac{k}{\gcd(r,k)}$ and
$b=d\frac{r}{\gcd(r,k)}$.
So, to each configurable tuple $(v,b,r,k)$ we can assign an integer $d$ 
and two different configurable tuples 
$(v,b,r,k)$ will have different integers $d$.
Let us call $D_{r,k}$ the set of all possible integers $d$ corresponding to configurable tuples
$(v,b,r,k)$.
That is,
$$D_{r,k}=\{d\in{\mathbb N}_0:(d{\small \frac{k}{\gcd(r,k)}},d{\small \frac{r}{\gcd(r,k)}},r,k)\mbox{ is configurable}\}.$$

Our aim is to study $D_{r,k}$.
We will consider the empty graph to be also a configuration and
consequently $0\in D_{r,k}$ for all pair $r,k$.
Obviously $D_{r,k}=D_{k,r}$ and $D_{1,k}=\{0,k\}$.
We will prove that if $r,k>1$ then $D_{r,k}$ is a numerical semigroup, that is, 
a subset of ${\mathbb N}_0$ containing $0$,
closed under addition and with a finite complement in ${\mathbb N}_0$.
A general reference on numerical semigroups is \cite{RoGa:llibre}.
If $a_1,\dots,a_l$ are coprime then the set
$\{n_1a_1+\dots+n_la_l:n_1,\dots,n_l\in{\mathbb N}_0\}$
is a numerical semigroup and it is called the semigroup generated by $a_1,\dots,a_n$ 
and denoted by $\langle a_1,\dots,a_l\rangle$. 

In the next section we will give a complete description
of $D_{2,k}$ and in the last one we
will study the case $r\geq 3$.

\section{The case $r=2$}

There is a natural bijection between $(v,b,2,k)$-configurations
and $k$-regular connected graphs with $b$ vertices and $v$ edges.
Two vertices in the graph share an edge if and only if the corresponding nodes
in the configuration share a neighbor and viceversa. The next well-known lemma 
is the key result for describing $D_{2,k}$.

\begin{lemma}
\label{lemma:r2keven}
\begin{enumerate}
\item
If $k$ is even,
a connected $k$-regular graph with $b$ vertices 
exists if and only if $b\geq k+1$.
\item
If $k$ is odd,
a connected $k$-regular graph with $b$ vertices 
exists if and only if $b$ is even and $b\geq k+1$.
\end{enumerate}
\end{lemma}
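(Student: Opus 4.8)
The plan is to prove the two implications of each part separately, the forward ones by simple counting and the converse ones by an explicit circulant construction.

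For necessity, let $G$ be a connected $k$-regular graph on $b$ vertices (the graphs in play are simple, being the ones obtained from $(v,b,2,k)$-configurations). Fix a vertex $u$: its $k$ neighbours are pairwise distinct, so together with $u$ they account for at least $k+1$ vertices, giving $b\ge k+1$ in both cases. Summing the degrees over all vertices counts every edge twice, so $bk=2|E(G)|$ is even; when $k$ is odd this forces $b$ to be even. This takes care of the ``only if'' direction in (1) and (2).

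For sufficiency I would realize the required graphs on the vertex set $\mathbb{Z}/b\mathbb{Z}$; recall that $k\ge2$, so the values of $m$ below are at least $1$. Suppose first that $k=2m$ is even and $b\ge k+1=2m+1$. Let $G$ join each $i$ to $i+1,\dots,i+m$, indices taken mod $b$. Because $b\ge2m+1$, for each $i$ the $2m$ vertices $i\pm1,\dots,i\pm m$ are pairwise distinct and distinct from $i$, so $G$ is a simple $2m$-regular graph; and it is connected since it contains the spanning cycle $0,1,\dots,b-1,0$. Now suppose $k=2m+1$ is odd, so $k+1$ is even and the hypothesis ``$b$ even, $b\ge k+1$'' just says $b$ is even and $b\ge2m+2$. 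Take the same circulant edges $\{i,i+j\}$ for $1\le j\le m$, and add the edge set $M=\{\,\{i,i+b/2\}:i\in\mathbb{Z}/b\mathbb{Z}\,\}$. Since $b/2\ge m+1>m$, no edge of $M$ coincides with a circulant edge and none is a loop, and $i\mapsto i+b/2$ is a fixed-point-free involution, so $M$ is a perfect matching raising every degree by exactly $1$. Hence the union is a simple $(2m+1)$-regular graph, still connected because it still contains the spanning cycle $0,1,\dots,b-1,0$.

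This is a classical fact, so there is no real obstacle; the only point demanding attention is verifying at each stage that the edges laid down create no loops and no repeated edges, which is precisely what the bound $b\ge k+1$ guarantees. (For the extremal value $b=k+1$ one could alternatively just take $K_{k+1}$, but the circulant description handles all admissible $b$ uniformly.)
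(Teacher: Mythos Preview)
Your proof is correct and follows essentially the same route as the paper: the necessity arguments are the standard degree/handshake counts, and for sufficiency the paper builds exactly the circulant graph you describe (edges $\{x_i,x_j\}$ when the cyclic distance is at most $\lfloor k/2\rfloor$, plus the antipodal matching $\{x_i,x_{i+b/2}\}$ when $k$ is odd). Your write-up is in fact slightly more explicit than the paper's in checking simplicity and connectedness.
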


\begin{proof}
\begin{enumerate}
\item
By definition, any $k$-regular graph must have a number of vertices at least $k+1$.
Conversely, suppose $b\geq k+1$.
Consider a set of vertices $x_1,\dots,x_b$.
Put an edge between $x_i$ and $x_j$, with $i\leq j$, if $j-i\leq k/2$
or $i+b-j\leq k/2$. This gives a 
connected $k$-regular graph with $b$ vertices.
\item
By definition, any $k$-regular graph must have a number of vertices at least $k+1$.
Now, since the number of edges is $kb/2$ this means that $kb$ must be even and since 
$k$ is odd $b$ must be even.
Conversely, suppose $b$ is even and $b\geq k+1$.
Consider a set of vertices $x_1,\dots,x_b$.
Put an edge between $x_i$ and $x_j$, with $i\leq j$, if $j-i\leq (k-1)/2$
or $i+b-j\leq (k-1)/2$. Put also 
edges between $x_i$ and $x_{i+b/2}$ for $i$ from $1$ to $b/2$.
This gives a 
connected $k$-regular graph with $b$ vertices.
\end{enumerate}
\end{proof}


\mut{
\begin{figure}[ht]
\begin{minipage}[b]{0.4\textwidth}
\centering
\compatiblegastexun
\setvertexdiam{4.5}
\letvertex P1 =( 40.0000000000000000000000000000 , 
20.0000000000000000000000000000 )
\letvertex P2 =( 36.1803414241805763016337553335 , 
31.7557029307883618277308380924 )
\letvertex P3 =( 26.1803448603053920203067348400 , 
39.0211287101395937216680492333 )
\letvertex P4 =( 13.8196675717115088598085693522 , 
39.0211327495458503915304133261 )
\letvertex P5 =( 3.81966625923032756060687528691 , 
31.7557135060918568865503366585 )
\letvertex P6 =( 0,
20.0000000000000000000000000000 )
\letvertex P7 =( 3.81965089241543174819148130591 , 
8.24430764452015502777390453461 )
\letvertex P8 =( 13.8196427076803472169334159019 , 
0.978875329274788386687321578112 )
\letvertex P9 =( 26.1803199962689501480313757502 , 
0.978863211056018378825781320897 )
\letvertex P10 =( 36.1803260573518566683688233664 , 
8.24427591860966985583296294187 )
\resizebox{.8\textwidth}{!}{
\begin{picture}(50,50)
\drawcircledvertex (P1 ){$x_1 $}
\drawcircledvertex (P2 ){$x_2 $}
\drawcircledvertex (P3 ){$x_3 $}
\drawcircledvertex (P4 ){$x_4 $}
\drawcircledvertex (P5 ){$x_5 $}
\drawcircledvertex (P6 ){$x_6 $}
\drawcircledvertex (P7 ){$x_7 $}
\drawcircledvertex (P8 ){$x_8 $}
\drawcircledvertex (P9 ){$x_9 $}
\drawcircledvertex (P10 ){$x_{10} $}
\drawundirectededge(P1 ,P2 ){}
\drawundirectededge(P1 ,P3 ){}
\drawundirectededge(P1 ,P9 ){}
\drawundirectededge(P1 ,P10 ){}
\drawundirectededge(P2 ,P3 ){}
\drawundirectededge(P2 ,P4 ){}
\drawundirectededge(P2 ,P10 ){}
\drawundirectededge(P3 ,P4 ){}
\drawundirectededge(P3 ,P5 ){}
\drawundirectededge(P4 ,P5 ){}
\drawundirectededge(P4 ,P6 ){}
\drawundirectededge(P5 ,P6 ){}
\drawundirectededge(P5 ,P7 ){}
\drawundirectededge(P6 ,P7 ){}
\drawundirectededge(P6 ,P8 ){}
\drawundirectededge(P7 ,P8 ){}
\drawundirectededge(P7 ,P9 ){}
\drawundirectededge(P8 ,P9 ){}
\drawundirectededge(P8 ,P10 ){}
\drawundirectededge(P9 ,P10 ){}
\end{picture}}
\end{minipage}
\begin{minipage}[b]{0.4\textwidth}
\centering
\compatiblegastexun
\setvertexdiam{4.5}
\letvertex P1 =( 40.0000000000000000000000000000 , 
20.0000000000000000000000000000 )
\letvertex P2 =( 36.1803414241805763016337553335 , 
31.7557029307883618277308380924 )
\letvertex P3 =( 26.1803448603053920203067348400 , 
39.0211287101395937216680492333 )
\letvertex P4 =( 13.8196675717115088598085693522 , 
39.0211327495458503915304133261 )
\letvertex P5 =( 3.81966625923032756060687528691 , 
31.7557135060918568865503366585 )
\letvertex P6 =( 0,
20.0000000000000000000000000000 )
\letvertex P7 =( 3.81965089241543174819148130591 , 
8.24430764452015502777390453461 )
\letvertex P8 =( 13.8196427076803472169334159019 , 
0.978875329274788386687321578112 )
\letvertex P9 =( 26.1803199962689501480313757502 , 
0.978863211056018378825781320897 )
\letvertex P10 =( 36.1803260573518566683688233664 , 
8.24427591860966985583296294187 )
\resizebox{.8\textwidth}{!}{
\begin{picture}(50,50)
\drawcircledvertex (P1 ){$x_1 $}
\drawcircledvertex (P2 ){$x_2 $}
\drawcircledvertex (P3 ){$x_3 $}
\drawcircledvertex (P4 ){$x_4 $}
\drawcircledvertex (P5 ){$x_5 $}
\drawcircledvertex (P6 ){$x_6 $}
\drawcircledvertex (P7 ){$x_7 $}
\drawcircledvertex (P8 ){$x_8 $}
\drawcircledvertex (P9 ){$x_9 $}
\drawcircledvertex (P10 ){$x_{10} $}
\drawundirectededge(P1 ,P2 ){}
\drawundirectededge(P1 ,P3 ){}
\drawundirectededge(P1 ,P9 ){}
\drawundirectededge(P1 ,P10 ){}
\drawundirectededge(P2 ,P3 ){}
\drawundirectededge(P2 ,P4 ){}
\drawundirectededge(P2 ,P10 ){}
\drawundirectededge(P3 ,P4 ){}
\drawundirectededge(P3 ,P5 ){}
\drawundirectededge(P4 ,P5 ){}
\drawundirectededge(P4 ,P6 ){}
\drawundirectededge(P5 ,P6 ){}
\drawundirectededge(P5 ,P7 ){}
\drawundirectededge(P6 ,P7 ){}
\drawundirectededge(P6 ,P8 ){}
\drawundirectededge(P7 ,P8 ){}
\drawundirectededge(P7 ,P9 ){}
\drawundirectededge(P8 ,P9 ){}
\drawundirectededge(P8 ,P10 ){}
\drawundirectededge(P9 ,P10 ){}
\drawundirectededge(P1 ,P6 ){}
\drawundirectededge(P2 ,P7 ){}
\drawundirectededge(P3 ,P8 ){}
\drawundirectededge(P4 ,P9 ){}
\drawundirectededge(P5 ,P10 ){}
\end{picture}
}
\end{minipage}
\caption{Construction of a 
a connected $4$-regular graph with $10$ vertices (left) and a 
connected $5$-regular graph with $10$ vertices (right).}
\label{fig:D2k}
\end{figure}
}

\begin{corollary}
$D_{2,k}=\left\{
\begin{array}{ll}
\langle k+1,k+2,\dots,2k+1\rangle
&\mbox{if }k\mbox{is even,}
\\
\left\langle\frac{k+1}{2},\frac{k+1}{2}+1,\frac{k+1}{2}+2,\dots,k\right\rangle
&\mbox{if }k\mbox{is odd.}
\end{array}
\right.$
\end{corollary}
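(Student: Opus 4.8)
The plan is to reduce the corollary to one elementary identity about numerical semigroups and then read off $D_{2,k}$ from Lemma~\ref{lemma:r2keven} through the bijection between $(v,b,2,k)$-configurations and connected $k$-regular graphs recalled above.

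The first ingredient is the following: for every integer $m\geq 2$,
$$\langle m,m+1,\dots,2m-1\rangle=\{0\}\cup\{n\in{\mathbb N}:n\geq m\}.$$
The inclusion ``$\subseteq$'' is immediate since every generator is $\geq m$. For ``$\supseteq$'', note that the values realized as a sum of exactly $j\geq 1$ of the generators (with repetition) form the full integer interval $[jm,\,j(2m-1)]$: one starts from $jm$ and raises one summand at a time up to $j(2m-1)$. Because $m\geq 2$ we have $(j+1)m\leq j(2m-1)+1$ for every $j\geq 1$, so consecutive intervals overlap and their union over $j\geq 1$ is $\{n:n\geq m\}$. The generators contain two consecutive integers, hence are coprime, so this is genuinely a numerical semigroup.

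For $k$ even we have $\gcd(2,k)=2$, so the tuple attached to $d$ is $(dk/2,\,d,\,2,\,k)$, which by the bijection is configurable precisely when a connected $k$-regular graph on $b=d$ vertices exists, the value $d=0$ corresponding to the empty configuration. By Lemma~\ref{lemma:r2keven}(1) this happens iff $d=0$ or $d\geq k+1$, so $D_{2,k}=\{0\}\cup\{d\in{\mathbb N}:d\geq k+1\}$. Since $k+1,k+2,\dots,2k+1$ are exactly the integers $m,m+1,\dots,2m-1$ for $m=k+1\geq 2$, the identity above gives $D_{2,k}=\langle k+1,k+2,\dots,2k+1\rangle$.

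For $k$ odd (hence $k\geq 3$ in the range of interest) we have $\gcd(2,k)=1$, so the tuple attached to $d$ is $(dk,\,2d,\,2,\,k)$, configurable precisely when a connected $k$-regular graph on $b=2d$ vertices exists. Here is the only point requiring attention: Lemma~\ref{lemma:r2keven}(2) also demands that $b$ be even, but $b=2d$ is automatically even, so the constraint collapses to $2d\geq k+1$, i.e.\ $d\geq(k+1)/2$ (an integer, $k$ being odd), together with $d=0$. Hence $D_{2,k}=\{0\}\cup\{d\in{\mathbb N}:d\geq(k+1)/2\}$, and since $\tfrac{k+1}{2},\tfrac{k+1}{2}+1,\dots,k$ are the integers $m,\dots,2m-1$ for $m=\tfrac{k+1}{2}$, the identity above yields $D_{2,k}=\langle\tfrac{k+1}{2},\tfrac{k+1}{2}+1,\dots,k\rangle$. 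I expect no real obstacle: the only genuine content is the numerical-semigroup identity, and everything else is bookkeeping with the $\gcd$-normalization defining $d$ and, in the odd case, the observation that the parity obstruction is vacuous once $b=2d$.
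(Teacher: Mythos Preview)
Your argument is correct and is exactly the natural fleshing-out of what the paper leaves implicit: the corollary is stated without proof, as an immediate consequence of Lemma~\ref{lemma:r2keven} via the bijection with $k$-regular graphs and the $\gcd$-normalization defining $d$. Your explicit verification of the identity $\langle m,\dots,2m-1\rangle=\{0\}\cup\{n\geq m\}$ and the case split on the parity of $k$ (in particular noting that $b=2d$ kills the parity constraint when $k$ is odd) supply precisely the missing details.
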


\section{The case $r\geq 3$, $k\geq 3$}


\subsection{The set $D_{r,k}$ is non-trivial}

\begin{lemma}[Sachs \cite{Sachs}]
\label{lemma:n-reg}
For any integer $n\geq 3$ and any 
$\gamma\geq 2$ there exists an 
$n$-regular graph with girth at least $\gamma$.
\end{lemma}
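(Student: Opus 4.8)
\emph{Plan.} My plan is to produce the required graph as a finite quotient of the $n$-regular tree $T_n$, exploiting that $T_n$ is the Cayley graph of a residually finite group. First realize $T_n$ as a Cayley graph. If $n=2m$ is even, let $\Gamma=F_m$ be the free group on $g_1,\dots,g_m$ and set $S=\{g_1^{\pm1},\dots,g_m^{\pm1}\}$. If $n=2m+1$ is odd, let $\Gamma=F_m*\langle t\mid t^2\rangle$ and set $S=\{g_1^{\pm1},\dots,g_m^{\pm1},t\}$. In either case $S=S^{-1}$, $e\notin S$, $|S|=n$, and by the normal form theorem for free groups and free products, any product $s_1\cdots s_\ell$ ($\ell\ge1$) of elements of $S$ in which no two cyclically consecutive factors are mutually inverse is a nonempty reduced word, hence $\ne e$; note that in the odd case the only one-step cancellation available is $t\cdot t$, which such a product avoids. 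Thus $\mathrm{Cay}(\Gamma,S)=T_n$.

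\emph{The finite quotient.} I would then use that $\Gamma$ is residually finite: for $F_m$ this is classical, and for $F_m*\mathbb{Z}/2$ it holds because that group is linear (being virtually free), or by Gruenberg's theorem on free products of residually finite groups. The ball $B=\{g\in\Gamma:|g|_S\le\gamma\}$ is finite, so for each $g\in B\setminus\{e\}$ pick a homomorphism onto a finite group not killing $g$, and let $N$ be the intersection of the (finitely many) kernels; then $N\trianglelefteq\Gamma$ has finite index and $N\cap B=\{e\}$. Put $X=\mathrm{Cay}(\Gamma/N,\pi(S))$, where $\pi\colon\Gamma\to\Gamma/N$ is the quotient map. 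Since $\gamma\ge2$, every $s\in S$ and every $s(s')^{-1}$ with $s\ne s'$ in $S$ lies in $B\setminus\{e\}$, so $\pi$ is injective on $S$ and $e\notin\pi(S)$; hence $X$ is a \emph{simple} $n$-regular graph. It is finite and connected, and being $n$-regular with $n\ge3$ it is not a forest, so it does contain cycles.

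\emph{Girth.} Suppose $X$ had a cycle of length $\ell\le\gamma$. Reading generators along it gives $s_1,\dots,s_\ell\in S$ with $\pi(s_1)\cdots\pi(s_\ell)=e$ and, because a cycle does not backtrack and $\pi$ is injective on $S$, no two cyclically consecutive factors $s_i,s_{i+1}$ mutually inverse. Then $s_1\cdots s_\ell\in N$, while also $s_1\cdots s_\ell\in B$ since it is a product of $\ell\le\gamma$ elements of $S$; hence $s_1\cdots s_\ell=e$ in $\Gamma$, contradicting the normal form remark above. So $X$ has no cycle of length $\le\gamma$, i.e.\ $\mathrm{girth}(X)>\gamma$; in particular $X$ is an $n$-regular graph of girth at least $\gamma$, which proves the lemma.

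\emph{Main obstacle and remarks.} The only non-elementary ingredient is the residual finiteness of $\Gamma$ — in the odd-degree case, of $F_m*\mathbb{Z}/2$; everything else is bookkeeping with Cayley graphs, the delicate points being the identification $\mathrm{Cay}(\Gamma,S)=T_n$ and the check that passing to $\Gamma/N$ with $N\cap B=\{e\}$ keeps the graph simple and $n$-regular while only lengthening cycles. If one wants a construction explicit enough to yield a usable bound on the number of vertices (in the constructive spirit of this note), the abstract appeal to residual finiteness can be replaced by concrete finite quotients, for instance those obtained by reducing a faithful linear representation of $\Gamma$ modulo suitable primes.
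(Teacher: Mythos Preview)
The paper does not prove this lemma; it is quoted from Sachs \cite{Sachs} and used as a black box in the proof of Lemma~\ref{lemma:Drkmot0}. Your argument is correct: realizing $T_n$ as the Cayley graph of a residually finite group $\Gamma$ and passing to a finite quotient $\Gamma/N$ with $N$ avoiding the ball of radius $\gamma$ does produce a finite simple $n$-regular graph of girth exceeding $\gamma$. The checks you flag --- injectivity of $\pi$ on $S$ (which uses $\gamma\ge 2$), simplicity and $n$-regularity of the quotient Cayley graph, and the lifting of a short non-backtracking cycle in $\Gamma/N$ to a nontrivial element of $\Gamma$ --- are all handled properly. The only mild imprecision is calling the lifted word ``reduced'' in the free-product case $F_m*\langle t\mid t^2\rangle$, where consecutive letters from $F_m$ must first be amalgamated before one has a normal form; but since no $g_ig_i^{-1}$ and no $tt$ occur, the amalgamated blocks are nontrivial and the conclusion that the word is $\ne e$ stands.

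By way of comparison, Sachs's original argument is explicitly combinatorial and recursive: from an $n$-regular graph of girth $g$ he builds one of girth $g+1$, yielding concrete (if large) bounds on the number of vertices. Your route through residual finiteness is shorter and more conceptual but, as you yourself remark, not constructive unless the abstract residual-finiteness step is replaced by explicit congruence quotients of a faithful linear representation. Since the present paper stresses that all its constructions are algorithmic, a proof in Sachs's spirit is the better fit for that purpose; for the bare existence statement actually needed here, your proof is entirely adequate.
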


\begin{lemma}
\label{lemma:Drkmot0}
For any pair of integers $r,k$,
there exists at least one non-zero integer in $D_{r,k}$ for all $r,k$.
\end{lemma}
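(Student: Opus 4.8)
The plan is to realize a $(v,b,r,k)$-configuration, for suitable $v,b$, by \emph{splitting the vertices} of a very regular bipartite graph produced by Sachs' Lemma~\ref{lemma:n-reg}, and then extracting a connected component.

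First I would dispose of the small cases. If $r=1$ or $k=1$ the statement is immediate (indeed $D_{1,k}=\{0,k\}$ contains $k\neq 0$), and if $r=2$ or $k=2$ it follows from the previous section, since Lemma~\ref{lemma:r2keven} shows $D_{2,k}$ (equivalently $D_{k,2}$) contains non-zero elements. So I would assume $r,k\geq 3$ and set $m=\mathrm{lcm}(r,k)$, so that $m\geq 3$ and $r\mid m$, $k\mid m$. By Lemma~\ref{lemma:n-reg} there is an $m$-regular graph $H$ of girth at least $6$; passing to its bipartite double cover $\widehat H$ (vertex set $V(H)\times\{0,1\}$, with $(u,0)$ adjacent to $(v,1)$ whenever $uv\in E(H)$), I obtain an $m$-regular bipartite graph that still has girth at least $6$, because a cycle of length $\leq 4$ in $\widehat H$ would be a $4$-cycle whose projection to $H$ is a $4$-cycle of $H$, contradicting the girth of $H$. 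Thus $\widehat H$ is an $(m,m)$-biregular bipartite graph with no $4$-cycle.

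Next comes the main construction. In $\widehat H$, I would split every vertex $x$ on the first side into $m/r$ new vertices by partitioning the $m$ edges incident to $x$ into blocks of size $r$, assigning one new vertex per block; and split every vertex $y$ on the second side into $m/k$ new vertices by partitioning its incident edges into blocks of size $k$. The resulting bipartite graph $B$ has all first-side degrees equal to $r$ and all second-side degrees equal to $k$. Crucially, $B$ still has no $4$-cycle: a $4$-cycle of $B$ would, upon recording for each of its four vertices the two original edges of $\widehat H$ meeting it, exhibit four pairwise-adjacent vertices of $\widehat H$ forming a $4$-cycle there, which is impossible. I expect the verification of this ``no new $4$-cycle'' step, together with the bookkeeping for the double cover, to be the routine-but-essential technical points.

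Finally I would handle connectivity, which is the one genuine obstacle: the graph $B$ (like $\widehat H$) need not be connected, but being $(r,k)$-biregular and $4$-cycle-free are properties inherited by every connected component, so any nonempty component $B_0$ of $B$ is already a $(v,b,r,k)$-configuration with $v,b\geq 1$ — no careful choice of the edge-partitions is needed. By the remarks preceding the definition of $D_{r,k}$, such a configuration supplies a positive integer $d\in D_{r,k}$, and the lemma follows.
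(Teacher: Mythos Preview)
Your argument is correct, but it follows a genuinely different construction from the paper's. The paper starts from the complete bipartite graph $K_{r,k}$, keeps a spanning tree, and lets $A$ be the remaining $n=rk-r-k+1$ edges; it then takes an $n$-regular Sachs graph $G$ of girth at least~$5$, places a copy of $K_{r,k}$ at each vertex of $G$, and for every edge of $G$ performs an edge swap between the two incident copies using edges of $A$. The spanning trees guarantee each copy stays internally connected and the swaps tie the copies together, so one obtains directly a connected $(r,k)$-biregular bipartite graph of girth at least~$5$.

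Your route---take an $m$-regular Sachs graph with $m=\mathrm{lcm}(r,k)$ and girth $\geq 6$, pass to the bipartite double cover, then split each side into vertices of the required degree---is shorter and conceptually cleaner: the ``no new $4$-cycle'' check is a one-line observation that distinct edges of $B$ map to distinct edges of $\widehat H$, so a $4$-cycle in $B$ would project to one in $\widehat H$. The price you pay is that connectivity is not automatic and you must pass to a component at the end, whereas the paper's swap construction is designed to yield a connected graph in one stroke (and, incidentally, only needs girth~$5$ rather than~$6$). Both proofs rest on Sachs' lemma in an essential way; they simply exploit it through different combinatorial gadgets.
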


\begin{proof}
The cases in which $r\leq 2$ or $k\leq 2$
have been proved in the previous sections.
So, we can assume that $r\geq 3$ and $k\geq 3$.
Consider the complete bipartite graph $K_{r,k}$.
From basic graph theory we know that
we can take a subset 
of $r+k-1$ edges in $K_{r,k}$ such that
they connect all $r+k$ vertices and no cycle
is formed (i.e., a generating tree).
Let $A$ be the set of the $rk-r-k+1$
remaining edges
of $K_{r,k}$.

Let $n=rk-r-k+1$ be the number of edges in $A$. 
Notice that since $r$ and $k$ are at least $3$ then $n\geq 3$.
Consider an $n$-regular graph $G$ with girth at least $5$ as in 
Lemma~\ref{lemma:n-reg}
and consider as many copies of
$K_{r,k}$ as vertices in $G$.
Associate each copy of $K_{r,k}$ to a different vertex in $G$.
For each edge $e$ in $G$, 
take the copies of the graphs $K_{r,k}$ 
corresponding to the ends of $e$ and
swap one edge $xy$ in $A$ in the first copy and
one edge $x'y'$ in $A$ in the second copy 
for $xy'$ and $x'y$
(here we abused notation using the same letter $A$ for different copies of
it).
This can be done in a way such that every time we take one edge in $A$ 
corresponding to a given 
copy of $K_{r,k}$, the edge is different.

It is easy to check that we obtain a non-trivial 
$(r,k)$-biregular bipartite graph
with girth at least $5$.
\end{proof}

\subsection{The set $D_{r,k}$ is a numerical semigroup}

\begin{lemma}
\label{lem:separacio} 
Suppose we have a $(v,b,r,k)$-configuration with $r,k\geq 2$.
There exist three edges in the configuration such that
the six ends are all different.
\end{lemma}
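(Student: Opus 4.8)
The plan is to find the three edges by a counting / pigeonhole argument on the structure of the bipartite graph. Call the two sides $V$ (with $|V|=v$, each vertex of degree $r$) and $B$ (with $|B|=b$, each vertex of degree $k$). We seek three edges $e_1,e_2,e_3$ whose six endpoints are pairwise distinct; equivalently, a matching of size $3$ in the graph. Since $r,k\geq 2$ and the configuration is connected with no $4$-cycles, it is large enough to contain such a matching, and I would extract it greedily.

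Concretely, first I would pick any edge $e_1=u_1w_1$ with $u_1\in V$, $w_1\in B$. Deleting the vertices $u_1,w_1$ removes at most $r$ edges incident to $u_1$ and at most $k$ edges incident to $w_1$ (the edge $e_1$ counted once), so it removes at most $r+k-1$ edges. Provided the configuration has at least $r+k$ edges, i.e. $vr=bk\geq r+k$, some edge $e_2=u_2w_2$ survives, and it is vertex-disjoint from $e_1$. Now delete $u_1,w_1,u_2,w_2$ as well; this removes at most $2(r+k)-2$ edges in total, so if the configuration has at least $2(r+k)-1$ edges there remains an edge $e_3$ disjoint from both $e_1$ and $e_2$, and $\{e_1,e_2,e_3\}$ is the desired triple. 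So the whole statement reduces to verifying the size bound $vr \geq 2(r+k)-1$.

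The main obstacle is therefore handling the small cases where this edge-count bound could fail. Here I would invoke the necessary condition P2, namely $v\geq r(k-1)+1$ (and symmetrically $b\geq k(r-1)+1$), which holds for any configuration; multiplying through, the number of edges is $vr\geq r(r(k-1)+1)=r^2(k-1)+r$. For $r,k\geq 2$ this already gives $vr\geq r^2+r\geq 6$ when $r\geq 2,k\geq 3$, and one checks directly that $r^2(k-1)+r\geq 2(r+k)-1$ in all remaining cases with $r,k\geq 2$ — the only genuinely tight case is $r=k=2$, where P2 forces $v\geq 3$, hence at least $6$ edges, which already suffices. Alternatively, and perhaps more cleanly, one can argue that $r=k=2$ forces the configuration (being a connected $2$-regular bipartite graph with no $4$-cycle) to be a single even cycle of length $\geq 6$, in which three alternate edges obviously have six distinct ends.

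I would write the proof in this order: (1) reduce to exhibiting a $3$-matching; (2) the greedy deletion argument showing that $vr\geq 2(r+k)-1$ edges suffice; (3) a short lemma-level computation, using P1 and P2, that this inequality always holds when $r,k\geq 2$, with the cycle observation dispatching the borderline $r=k=2$ case. This keeps the argument self-contained, using only the earlier facts that a configuration satisfies $vr=bk$ and P2, together with the $4$-cycle-free and connectivity hypotheses (the latter only needed to guarantee the configuration is nonempty, hence has edges at all).
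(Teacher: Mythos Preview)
Your argument is correct, with one small wrinkle: in the sentence ``the only genuinely tight case is $r=k=2$, where P2 forces $v\ge 3$, hence at least $6$ edges, which already suffices,'' the greedy deletion bound actually asks for $2(r+k)-1=7$ edges, not $6$, so the counting step alone does \emph{not} close this case. Your subsequent cycle observation (a connected $2$-regular bipartite graph with no $4$-cycle is an even cycle of length $\ge 6$, where three alternate edges form the matching) is what actually handles $r=k=2$; just make sure the write-up relies on that and not on the ``$6$ edges suffices'' remark.

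The paper's proof is genuinely different and somewhat slicker. Rather than counting edges and invoking P2, it builds a path directly: starting from any edge and repeatedly extending using minimum degree $\ge 2$, the bipartiteness and the no-$4$-cycle hypothesis force each new vertex to be distinct, yielding a path on six distinct vertices after five steps; the first, third, and fifth edges of this path are the desired matching. This uses only the lemma's own hypotheses (girth $\ge 6$ and $r,k\ge 2$), avoids any appeal to P2, and needs no case split. Your approach, by contrast, is a general-purpose matching-existence argument that would work in any bipartite graph with the right edge count; the trade-off is that you must import the inequality P2 from outside the lemma and treat $r=k=2$ separately.
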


\begin{proof}
Since no cycle of length $4$ exists and $r,k\geq 2$,
there exists a path with four edges with the five ends being different.
Three of these ends will be on one partition of the graph while the other two
will be in the other partition.
Take the vertex at the end of the path.
It must be one of the three in the same partition.
Since its degree is at least $2$,
then it will have one neighbor not in the path.
So, by adding the edge from the end of the path to this additional vertex,
we obtain a new path with $5$ edges with all its vertices being different.
By taking the first, third, and fifth edges of this new path we obtain the result.
\end{proof}

This lemma tells us that the vertices $\{x_1,\dots,x_v\}$, $\{y_1,\dots,y_b\}$
 in a $(v,b,r,k)$-configuration with $r\geq 3$ can be arranged in a way such that
the edges $x_1y_1$, $x_2y_2$ and $x_vy_b$ belong to the configuration.

Suppose we have a $(v,b,r,k)$-configuration
with vertices
$\{x_1,\dots,x_v\}$, $\{y_1,\dots,y_b\}$
 and
a $(v',b',r,k)$-configuration
with vertices
$\{x'_1,\dots,x'_v\}$, 
$\{y'_1,\dots,y'_b\}$.
Consider the graph with vertices
$\{x_1,\dots,x_v\}\cup\{x'_1,\dots,x'_v\}$, $\{y_1,\dots,y_b\}\cup\{y'_1,\dots,y'_b\}$
and all the edges in the original configurations.
Swap the edges $x_vy_b$ and $x'_1y'_1$ for $x_vy'_1$ and $x'_1y_b$. 
This gives a $(v+v',b+b',r,k)$ configuration
\cite{DoBrWuMa}.
This construction proves the next lemma.

\begin{lemma}
\label{lemma:construccio}
If $(v,b,r,k)$, $(v',b',r,k)$ are configurable tuples, so is 
$(v+v',b+b',r,k)$.
\end{lemma}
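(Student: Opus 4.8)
The plan is to prove Lemma~\ref{lemma:construccio} by the gluing construction already sketched in the paragraph immediately preceding the statement, making the two small points that paragraph glosses over fully explicit: that the relabelling asserted by Lemma~\ref{lem:separacio} is genuinely available, and that the swap neither destroys biregularity, nor creates a $4$-cycle, nor disconnects the graph. First I would invoke Lemma~\ref{lem:separacio} on each configuration to obtain, in the first one, three edges with six distinct endpoints; relabel the vertices so that among these three edges are $x_1y_1$, $x_2y_2$, and $x_vy_b$ (this is a pure renaming of the $v$ vertices on one side and the $b$ vertices on the other, so it changes nothing structurally). Do the same for the primed configuration, arranging that $x'_1y'_1$, $x'_2y'_2$, $x'_{v'}y'_{b'}$ are edges. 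Then form the disjoint union of the two graphs and perform the swap: delete the edges $x_vy_b$ and $x'_1y'_1$ and insert $x_vy'_1$ and $x'_1y_b$.

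Next I would check the three properties. \emph{Biregularity:} the vertex $x_v$ loses the neighbour $y_b$ and gains the neighbour $y'_1$, so its degree is unchanged at $r$; symmetrically $y'_1$ stays at degree $k$, $x'_1$ at degree $r$, $y_b$ at degree $k$; every other vertex is untouched. The new edges $x_vy'_1$ and $x'_1y_b$ each run between the two sides, so the graph is still bipartite with the inherited bipartition, and it has exactly $v+v'$ vertices of degree $r$ and $b+b'$ of degree $k$. \emph{No $4$-cycle:} a $4$-cycle in the new graph must use at least one of the two new edges, since each original piece is $4$-cycle-free and there are no other edges between the two pieces. If it uses exactly one new edge, say $x_vy'_1$, then it has the form $x_v, y'_1, x', y, x_v$ with $x'\in\{x'_1,\dots\}$ a neighbour of $y'_1$ and $y$ a common neighbour of $x'$ and $x_v$; but $x_v$ has no neighbour in the primed side other than $y'_1$, so no such $y$ exists. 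If it uses both new edges, the cycle is $x_v, y'_1, ?, \dots$ and must connect $y'_1$ back to $y_b$ via $x'_1$, i.e.\ $x_v,y'_1,x'_1,y_b,x_v$ — but this is a $4$-cycle only if $y'_1$ is adjacent to $x'_1$ and $y_b$ is adjacent to $x_v$, and we deleted both of those edges; traversing it instead as $x_v, y'_1, x'_1, y_b$ has length $3$ in terms of new/remaining edges and does not close up into a $4$-cycle. So the girth condition survives. \emph{Connectedness:} each original configuration is connected, and the new edge $x_vy'_1$ joins a vertex of the first to a vertex of the second, so the union of the two vertex sets is connected.

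The main obstacle — and it is a mild one — is the no-$4$-cycle verification, because one has to be careful that the swap really only introduces edges incident to the four special vertices and that the three edges supplied by Lemma~\ref{lem:separacio} having \emph{distinct} endpoints is exactly what prevents a short cycle from forming when the edge $x_1y_1$ or $x_2y_2$ happens to share an endpoint with the swapped pair; choosing three pairwise-disjoint edges (rather than just two) gives the slack needed so that $x_1y_1$ and $x_2y_2$ remain available as genuine edges of the new configuration and are not accidentally the deleted ones. (In fact for this lemma only the single edge $x_vy_b$ of the first configuration and $x'_1y'_1$ of the second are used in the swap; the extra two edges per configuration are there so that the \emph{same} relabelled configuration can be fed back into the construction repeatedly, which is what one needs to conclude closure under arbitrary finite sums — and hence, together with Lemma~\ref{lemma:Drkmot0} and the fact that $D_{r,k}$ is closed under addition, that $D_{r,k}$ is a numerical semigroup once one also knows its generators are coprime.) Everything else is bookkeeping, and the construction is explicit, so the proof is constructive as the abstract promises.
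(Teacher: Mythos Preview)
Your proof is correct and follows the same construction the paper gives in the paragraph immediately preceding the lemma: take the disjoint union and swap $x_vy_b$, $x'_1y'_1$ for $x_vy'_1$, $x'_1y_b$; you have simply spelled out the verifications of biregularity, girth, and connectedness that the paper leaves implicit. Your observation that only one edge per configuration is actually used in this swap, with the full strength of Lemma~\ref{lem:separacio} reserved for the proof of Theorem~\ref{theorem:Drksemigroup}, is also correct.
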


\begin{lemma}
\label{lemma:submonoid}
$D_{r,k}$ satisfies
\begin{itemize}
\item $0\in D_{r,k}$
\item If $d,d'\in D_{r,k}$ then $d+d'\in D_{r,k}$.
\end{itemize}
\end{lemma}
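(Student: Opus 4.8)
The plan is to deduce Lemma~\ref{lemma:submonoid} directly from the two constitutive facts already established, after translating between the language of configurable tuples and the language of the integer invariant $d$. Recall that a configurable tuple $(v,b,r,k)$ with $r,k\ge 2$ satisfies $vr=bk$, and that the associated integer is $d=v\gcd(r,k)/k=b\gcd(r,k)/r$; conversely $d\in D_{r,k}$ means precisely that $(d\frac{k}{\gcd(r,k)},d\frac{r}{\gcd(r,k)},r,k)$ is configurable. So the two bullets to be proved are exactly the statements ``$0\in D_{r,k}$'' and ``$d,d'\in D_{r,k}\Rightarrow d+d'\in D_{r,k}$'' phrased in terms of $d$.

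First I would dispose of $0\in D_{r,k}$: by the convention adopted in Section~2 the empty graph is a configuration, and it corresponds to $v=b=0$, hence to $d=0$; this holds for every pair $r,k$, so $0\in D_{r,k}$. For the closure under addition, suppose $d,d'\in D_{r,k}$. Writing $v=d\frac{k}{\gcd(r,k)}$, $b=d\frac{r}{\gcd(r,k)}$ and similarly $v'=d'\frac{k}{\gcd(r,k)}$, $b'=d'\frac{r}{\gcd(r,k)}$, both $(v,b,r,k)$ and $(v',b',r,k)$ are configurable by hypothesis. By Lemma~\ref{lemma:construccio} the tuple $(v+v',b+b',r,k)$ is then configurable. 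But $v+v'=(d+d')\frac{k}{\gcd(r,k)}$ and $b+b'=(d+d')\frac{r}{\gcd(r,k)}$, so the integer associated to this new configurable tuple is exactly $d+d'$; hence $d+d'\in D_{r,k}$. This completes the argument.

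Strictly speaking there is one routine edge case to acknowledge: Lemma~\ref{lemma:construccio} is stated for genuine (possibly empty via the convention, but the separating-edges Lemma~\ref{lem:separacio} needs $r,k\ge 2$ and a nonempty configuration) configurations, so if one of $d,d'$ is $0$ the sum is trivially in $D_{r,k}$, and if both are nonzero the hypotheses of Lemma~\ref{lemma:construccio} apply. I would simply note that when either summand is $0$ the conclusion is immediate, and otherwise invoke Lemma~\ref{lemma:construccio} directly.

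The main obstacle here is essentially nil: the substance of the closure property was already extracted in Lemma~\ref{lemma:construccio} (itself resting on the edge-swapping construction justified by Lemma~\ref{lem:separacio}), so Lemma~\ref{lemma:submonoid} is a bookkeeping restatement. If anything, the one point requiring a moment's care is checking that the bijective correspondence $d\leftrightarrow (v,b)$ is compatible with addition — i.e.\ that adding the $d$-invariants corresponds to adding the pairs $(v,b)$ componentwise — which is immediate from the defining formulas $v=d\frac{k}{\gcd(r,k)}$, $b=d\frac{r}{\gcd(r,k)}$ being linear in $d$.
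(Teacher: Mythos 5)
Your proposal is correct and follows exactly the paper's own (one-line) argument: $0\in D_{r,k}$ by the empty-graph convention, and closure under addition is immediate from Lemma~\ref{lemma:construccio} once one notes that the correspondence $d\leftrightarrow(v,b)$ is linear. Your extra remark about the trivial case where one summand is $0$ is a reasonable bit of care that the paper leaves implicit.
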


\begin{proof}
Obviously $0\in D_{r,k}$
and, by Lemma~\ref{lemma:construccio}, if
 $d,d'\in D_{r,k}$ then $d+d'\in D_{r,k}$.
\end{proof}

In order to have a numerical semigroup it remains to see that
the number of elements in ${\mathbb N}_0\setminus D_{r,k}$
is finite. This will be proved in the next theorem.
In the proof of the theorem it is used that 
two coprime integers generate a numerical semigroup and so, if a subset containing $0$
and closed under addition contains two coprime integers then it is a numerical semigroup.

%

\mut{
\compatiblegastexun
\setvertexdiam{2.5}
\letvertex V0=(0,25)
\letvertex V9=(0,0)
\letvertex V=(0,12.5)
\letvertex V1=(20,25)
\letvertex V2=(20,20)
\letvertex V3=(20,15)
\letvertex V4=(20,0)
\letvertex D1=(18,5)
\letvertex D2=(18,7.5)
\letvertex D3=(18,10)
\letvertex P=(25,12.5)
\letvertex PP=(30,12.5)

\begin{tabular}{ccl}

\resizebox{1cm}{!}{
\begin{picture}(30,25)
\drawcircledvertex(V0){}
\drawvertex(D1){$\vdots$}
\drawvertex(D2){$\vdots$}
\drawvertex(D3){$\vdots$}
\drawvertex(P){\resizebox{.5cm}{3.5cm}{$\}$}}
\drawvertex(PP){\resizebox{.5cm}{!}{$r$}}
\drawundirectededge(V0,V1){}
\drawundirectededge(V0,V2){}
\drawundirectededge(V0,V3){}
\drawundirectededge(V0,V4){}
\end{picture}
}

\\

\phantom{l}

\\

\resizebox{1cm}{!}{
\begin{picture}(30,25)
\drawcircledvertex(V){}
\drawvertex(D1){$\vdots$}
\drawvertex(D2){$\vdots$}
\drawvertex(D3){$\vdots$}
\drawvertex(P){\resizebox{.5cm}{3.5cm}{$\}$}}
\drawvertex(PP){\resizebox{.5cm}{!}{$r$}}
\drawundirectededge(V,V1){}
\drawundirectededge(V,V2){}
\drawundirectededge(V,V3){}
\drawundirectededge(V,V4){}
\end{picture}
}

\\

\phantom{l}

\\

\resizebox{1cm}{!}{
\begin{picture}(30,25)
\drawcircledvertex(V){}
\drawvertex(D1){$\vdots$}
\drawvertex(D2){$\vdots$}
\drawvertex(D3){$\vdots$}
\drawvertex(P){\resizebox{.5cm}{3.5cm}{$\}$}}
\drawvertex(PP){\resizebox{.5cm}{!}{$r$}}
\drawundirectededge(V,V1){}
\drawundirectededge(V,V2){}
\drawundirectededge(V,V3){}
\drawundirectededge(V,V4){}
\end{picture}
}

\\

$\vdots$

\\

\resizebox{1cm}{!}{
\begin{picture}(30,25)
\drawcircledvertex(V9){}
\drawvertex(D1){$\vdots$}
\drawvertex(D2){$\vdots$}
\drawvertex(D3){$\vdots$}
\drawvertex(P){\resizebox{.5cm}{3.5cm}{$\}$}}
\drawvertex(PP){\resizebox{.5cm}{!}{$r$}}
\drawundirectededge(V9,V1){}
\drawundirectededge(V9,V2){}
\drawundirectededge(V9,V3){}
\drawundirectededge(V9,V4){}
\end{picture}
}
\end{tabular}
}

\begin{theorem}
\label{theorem:Drksemigroup}
$D_{r,k}$ is a numerical semigroup.
\end{theorem}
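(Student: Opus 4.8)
The plan is to peel off the degenerate cases and then produce two coprime elements of $D_{r,k}$; since $D_{r,k}$ is a submonoid of $\mathbb N_0$ by Lemma~\ref{lemma:submonoid}, the existence of two coprime elements forces it to be a numerical semigroup, exactly as recalled just above the theorem. For the reduction, note that $D_{r,k}=D_{k,r}$; that the statement is meant for $r,k\ge 2$ (for $r=1$ one only has $D_{1,k}=\{0,k\}$, which is not even closed under addition); and that when $\min(r,k)=2$ the explicit description of $D_{2,k}$ obtained in the previous section already displays it as a numerical semigroup. So I may assume $r,k\ge 3$, and I set $g=\gcd(r,k)$.

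Next I would gather a first, cheap supply of elements. By Lemma~\ref{lemma:Drkmot0} there is a nonzero $d_0\in D_{r,k}$, and closure under addition (Lemma~\ref{lemma:construccio}) then yields arbitrarily large elements $md_0\in D_{r,k}$. Inspecting the construction behind Lemma~\ref{lemma:Drkmot0}, the configuration built there from an $(r-1)(k-1)$-regular girth-$5$ graph on $N$ vertices has $Nk$ vertices of degree $r$ and $Nr$ of degree $k$, i.e. it realizes the element $Ng$; hence every element produced so far is a multiple of $g$. When $g>1$ this family lies inside $g\mathbb Z$ and cannot by itself force $\gcd(D_{r,k})=1$, so the real task is to reach an element outside $g\mathbb Z$.

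For that I would perform a local surgery. Take a configuration $C$ with parameter $D\in D_{r,k}$ (of the form $md_0$), taken large compared with a prime $c$ chosen with $c\nmid D$. Passing from $D$ to $D+c$ amounts to adjoining $\frac{ck}{g}$ new vertices of degree $r$, $\frac{cr}{g}$ new vertices of degree $k$, and, net, $\frac{crk}{g}$ new edges. I would realize this by deleting from $C$ an induced matching of $\frac{crk}{g}$ edges whose edges are pairwise far apart (possible since $C$ is large and $4$-cycle-free), adjoining the new vertices, and reconnecting the now-deficient old endpoints --- each short exactly one edge --- together with all the new vertices through an auxiliary bipartite graph $H$ whose degree sequence matches these deficiencies; a one-line count shows the two sides of $H$ carry equal edge-demand $\frac{2crk}{g}$. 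The crucial point is that the deficient old vertices have $H$-degree $1$, hence are leaves of $H$ lying on no cycle of $H$; the only non-leaf vertices of $H$ are the new ones, of degree $r$ or $k$, and for $c$ large there is ample room to join them with no $4$-cycle. Together with the far-apart choice of the matching, which forbids a $4$-cycle built from an $H$-edge and a short path inside $C$, this gives a genuine $(r,k)$-configuration with parameter $D+c$. Hence $D,D+c\in D_{r,k}$ are coprime, and $D_{r,k}$ is a numerical semigroup.

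The step I expect to be the main obstacle is precisely the design and verification of this surgery: first, recognizing that Lemma~\ref{lemma:Drkmot0} together with closure only produces multiples of $g$, so a genuinely new construction is needed once $g>1$; and second, checking that the patch $H$ and the choice of the deleted matching can always be made to introduce no $4$-cycle, both inside $H$ and between $H$ and $C$. The leaf property of the deficient vertices and the ``far-apart'' choice of the matching --- in the same spirit as the girth-$5$ hypothesis of Lemma~\ref{lemma:Drkmot0} and the ``three edges with six distinct ends'' of Lemma~\ref{lem:separacio} --- are what make this go through. An alternative that bypasses the surgery would be to exhibit directly, say by a circulant or difference-set construction on $\mathbb Z_v$, a $4$-cycle-free $(r,k)$-biregular bipartite graph with $k\nmid v$, i.e. an element of $D_{r,k}$ outside $g\mathbb Z$, and, repeating this to kill each prime factor of $g$, again conclude $\gcd(D_{r,k})=1$.
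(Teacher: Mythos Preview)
Your strategy---manufacture an element coprime to an existing $D\in D_{r,k}$ by deleting a sparse matching and reattaching through a patch $H$---is exactly the paper's, and the paper's execution shows how to close the gaps you yourself flag. Two simplifications collapse most of the work. First, you do not need a prime $c$ with $c\nmid D$: the paper takes $c=1$, i.e.\ it builds a configuration with parameter $sm+1$ from the minimal nonzero $m\in D_{r,k}$ and $s=rk/g$, and $\gcd(m,sm+1)=1$ is automatic. Second, rather than hunting for a far-apart induced matching inside one large host, the paper takes $s$ \emph{disjoint copies} of the parameter-$m$ configuration, chains them together by the edge-swap of Lemma~\ref{lemma:construccio}, and deletes exactly one edge $x_2^{(i)}y_2^{(i)}$ per copy; these deleted edges are then trivially far apart, and Lemma~\ref{lem:separacio} is invoked precisely to supply three edges with six distinct ends in each copy so that the chaining swap, the deletion, and the new attachments never collide. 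Your patch $H$ becomes completely explicit---each of the $k/g$ new $x'_j$ is joined to $r$ consecutive $y_2^{(i)}$, each of the $r/g$ new $y'_j$ to $k$ consecutive $x_2^{(i)}$, with no new--new edges (your degree count already forces this)---and both the absence of $4$-cycles and connectedness, which your sketch leaves unaddressed, follow by inspection.
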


\begin{proof}
Because of the results in the previous sections we can assume that $r$ and $k$ are at least $3$.
By Lemma~\ref{lemma:Drkmot0} and since $D_{r,k}\subseteq{\mathbb N}$,
there is a minimal non-zero element $m$ in $D_{r,k}$.
Let us call $v=mk/\gcd(r,k)$ and $b=mr/\gcd(r,k)$.
Select a 
$(v,b,r,k)$ 
configuration.
Take $s=rk/\gcd(r,k)$ 
copies of this configuration.
Let us call the vertices of the $i$th copy
$x^{(i)}_1,\dots,x^{(i)}_v$, $y^{(i)}_1,\dots,y^{(i)}_b$.
By Lemma \ref{lem:separacio} 
we can assume that
$x^{(i)}_1y^{(i)}_1$, $x^{(i)}_2y^{(i)}_2$ and $x^{(i)}_vy^{(i)}_b$ belong to the 
$i$th copy.
Consider $k/\gcd(r,k)$ further vertices $x'_1,\dots,x'_{k/\gcd(r,k)}$
and $r/\gcd(r,k)$ further vertices $y'_1,\dots,y'_{r/\gcd(r,k)}$.
For all $i<s$ swap the edges $x^{(i)}_vy^{(i)}_b$ and $x^{(i+1)}_1y^{(i+1)}_1$ 
for $x^{(i)}_vy^{(i+1)}_1$ and $x^{(i+1)}_1y^{(i)}_b$. 
Remove the edges $x^{(i)}_2y^{(i)}_2$ for all $i\leq s$.
Add the edges
$$x'_1y^{(1)}_2, x'_1y^{(2)}_2, \dots, x'_1y^{(r)}_2,$$
$$x'_2y^{(r+1)}_2, x'_2y^{(r+2)}_2, \dots, x'_2y^{(2r)}_2,$$
$$\vdots$$
$$x'_{k/\gcd(r,k)}y^{(s-r+1)}_2,\dots,x'_{k/\gcd(r,k)}y^{(s)}_2$$
and
$$x^{(1)}_2y'_1,x^{(2)}_2y'_1,\dots,x^{(k)}_2y'_1,$$
$$x^{(k+1)}_2y'_2,x^{(k+2)}_2y'_2,\dots,x^{(2k)}_2y'_2,$$
$$\vdots$$
$$x^{(s-k+1)}_2y'_{r/\gcd(r,k)},\dots,x^{(s)}_2y'_{r/\gcd(r,k)}.$$

It is easy to check that this is a new configuration
with parameters
\begin{eqnarray*}\left(sv+\frac{k}{\gcd(r,k)},sb+\frac{r}{\gcd(r,k)},r,k\right)
&=&\left(\frac{(sm+1)k}{\gcd(r,k)},\frac{(sm+1)r}{\gcd(r,k)},r,k\right)
\end{eqnarray*}
and so $sm+1\in D_{r,k}$.

Since $m$ and $sm+1$ are coprime, they generate a numerical semigroup
and this semigroup 
is contained in $D_{r,k}$. So the complement of $D_{r,k}$ in ${\mathbb N}_0$
is finite and $D_{r,k}$ is a numerical semigroup.
\end{proof}

As a consequence of the fact that the necessary conditions P1, P2 are also sufficient
for $k=3$ it is easy to deduce that 
$D_{r,k}=\{0\}\cup(\frac{2r+1}{3}\gcd(3,r)+{\mathbb N}_0)$.
The computation of examples for $r,k>3$ is 
computationally very hard.

%

\section{Conclusion}
The main conclusion of this short note is that
for fixed $r$ and $k$ there exist configurations
for all parameters $b,v$ large enough provided that $vr=bk$.

Another important fact is that our proofs are all constructive and so 
we can derive algorithms for constructing large configurations.

\bibliographystyle{plain}

\end{document}